\renewcommand{\d}{\mathrm{d}}
\renewcommand{\H}{\mathcal{H}}
\newcommand{\J}{\mathcal{J}}
\newtheorem{theorem}{Theorem}
\newtheorem{lemma}[theorem]{Lemma}
\definecolor{orange}{rgb}{1,0.5,0}
\definecolor{rb}{rgb}{1,0,1}
\begin{document}

\title{\textbf{The transmission of symmetry in liquid crystals}}
\author{Jie Xu \& Pingwen Zhang\footnote{Corresponding author}\\[2mm]
{\small LMAM \& School of Mathematical Sciences, Peking University, 
  Beijing 100871, China}\\[1mm]
{\small E-mail: rxj\_2004@126.com,\, pzhang@pku.edu.cn}\\
}
\date{\today}
\maketitle

\begin{abstract}
The existing experiments and simulations suggest that the molecular symmetry 
is always transmitted to homogeneous phases in liquid crystals. 
It has been proved for rod-like molecules. 
We conjecture that it holds for three other symmetries, 
and prove it for some molecules of these symmetries. 
\end{abstract}

\section{Introduction}
The application of liquid crystals benefits from their subtlety in anistropy, 
which origins from the anistropy at the moleculer level. 
Let us consider a rod-like molecule. Except for its location $\bm{x}$, 
we need to express its orientation by a unit vector $\bm{m}$. 
The distribution $f$ thus depends on both $\bm{x}$ and $\bm{m}$, 
and the anistropy may origin from either of them. 
The phases in which $f$ is independent of $\bm{x}$ are referred to as 
homogeneous phases. These phases show anistropy while keeping mobility in 
all directions. 
A typical example is the uniaxial nematic phase, where there exists a unit 
vector $\bm{n}$ such that $f=f((\bm{m}\cdot\bm{n})^2)$. 

Symmetry is always a central topic where anistropy appears. 
In liquid crystals, we need to discuss the symmetry at both macroscopic level 
and microscopic level: the phase symmetry and the molecular symmetry. 
The physical properties are mainly connected to the phase symmetry. 
Aiming at designing materials of physical properties more delicate, 
people have been striving for phases of other symmetries. 
This can be done by exerting external forces or confinements, but it brings 
limitation to application. 
With the hope of obtaining different phase symmetries spontaneously, 
people choose to alter the molecular symmetry. 
Among these molecules bent-core molecules have attracted considerable interests, 
whose rigid part possesses a bending (see the molecule in the middle of Fig. \ref{molecules}). 
Numerous unconventional liquid crystalline phases have been found for these molecules. 

Despite the rich phase behaviors obtained, by far no homogeneous phases 
has been found breaking the molecular symmetry. 
The uniaxial nematic phase, the only homogeneous phase rod-like molecules 
exhibit, is axisymmetric, identical to the symmetry of a rod. 
It is also the case for bent-core molecules, of which the homogeneous phases 
observed are restricted to the uniaxial and the biaxial nematic phases. 
Hence we would like to ask a question: will the molecular symmetry always be 
transmitted to phases? 
For further discussion, we need a clear mathematical formulation about phase and symmetry. 

\subsection{Rod-like molecules}
The theoretical study of liquid crystals begins from Onsager \cite{Ons}. 
He proposed a free energy functional for rods, 
\begin{align}
  F[f]=\int \d\bm{m} f(\bm{m})\log f(\bm{m})
  +\frac{c}{2}\int\d\bm{m}\d\bm{m'} 
    f(\bm{m})G(\bm{m},\bm{m'})f(\bm{m})\label{FreeEngN_rod}, 
\end{align}
where $c>0$ is an intensity parameter, and $f$ shall meet the normalization condition, 
\begin{equation}\label{Consrv_rod}
\int\d\nu f(\bm{m})=1. 
\end{equation}
The energy functional considers homogeneous phases only, 
as it does not include $\bm{x}$. 
Each phase corresponds to a local minimum. 

The energy functional is characterized by the kernel function $G$ 
that reflects the pairwise molecular interaction. 
Onsager considered the hard repulsive interaction and 
calculated the leading term of the excluded volume of two rods 
\begin{equation}
cG=2cl^2D|\bm{m}\times\bm{m'}| \label{Onsager}
\end{equation}
as the kernel function, where $l$ is the length and $D$ is the thickness. 
Later Maier and Saupe \cite{M_S} proposed a quadratic approximate kernel function, 
\begin{equation}
cG=c_2(\bm{m}\cdot\bm{m'})^2. \label{Maier_Saupe0}
\end{equation}
Both kernel are applied in the discussion of the isotopic -- uniaxial nematic phase transitions of rods. 
Because the polynomial form brings conveniences, the Maier-Saupe kernel has 
received much more attention, and is adopted widely in dynamic models \cite{Doi_book,Dyn1,Dyn2}. 

Axisymmetry is an important concept for rods. A rod is invariant when rotating 
it about its axis. This is why we can use the vector $\bm{m}$ to represent its 
oreientation. On the other hand, a phase is axisymmetric if $f$ is, 
which is expressed as 
$$
f=f(\bm{m}\cdot\bm{n}). 
$$
For the Maier-Saupe kernel, the axisymmetry of $f$ has been proved \cite{AxiSymMS,AxiSym2,AxiSym3}: 

\emph{The critical points of (\ref{FreeEngN_rod}) with the Maier-Saupe kernel (\ref{Maier_Saupe0}) shall satisfy $f=f((\bm{m}\cdot\bm{n})^2)$, where $\bm{n}$ is a unit vector.} 

Armed with this result, it is not difficult to find all the solutions. 
It also provides a solid foundation for the well-known Oseen-Frank theory 
\cite{Oseen_Frank} and Ericksen-Leslie theory \cite{Ericksen_Leslie}, 
which are built based on the axisymmetric assumption much earlier. 

\subsection{General formulation}
Although an elegant result has been acquired for rods, things become much more 
complicated for generic rigid molecules. 
When dealing with these molecules, we need a right-handed body-fixed 
orthonormal frame $(\bm{m}_1(P),\bm{m}_2(P),\bm{m}_3(P))$ to represent the 
orientation of a molecule. 
The variable $P\in SO(3)$ determines the orientation of the frame. 
The matrix representation of $P$ can be written as 
\begin{equation}
  P=\left(
  \begin{array}{ccc}
    m_{11} & m_{21} & m_{31} \\
    m_{12} & m_{22} & m_{32} \\
    m_{13} & m_{23} & m_{33} 
  \end{array}
  \right), 
\end{equation}
where $m_{ij}(P)=\bm{m}_i\cdot\bm{e}_j$ denotes the $j$th component of 
$\bm{m}_i$ in the space-fixed right-handed orthonormal frame 
$(\bm{e}_1,\bm{e}_2,\bm{e}_3)$. 
They can be expressed with three Euler angles
$$
\alpha\in [0,\pi],\ \beta,\gamma\in [0,2\pi)
$$
by 
\begin{align}
&P(\alpha,\beta,\gamma)\nonumber\\
=&\left(
\begin{array}{ccc}
 \cos\alpha &\quad -\sin\alpha\cos\gamma &\quad\sin\alpha\sin\gamma\\
 \sin\alpha\cos\beta &\quad\cos\alpha\cos\beta\cos\gamma-\sin\beta\sin\gamma &
 \quad -\cos\alpha\cos\beta\sin\gamma-\sin\beta\cos\gamma\\
 \sin\alpha\sin\beta &\quad\cos\alpha\sin\beta\cos\gamma+\cos\beta\sin\gamma &
 \quad -\cos\alpha\sin\beta\sin\gamma+\cos\beta\cos\gamma
\end{array}
\right).\label{EulerRep}
\end{align}
In this case, the uniform probability measure on $SO(3)$ is given by
$$
\d\nu=\frac{1}{8\pi^2}\sin\alpha\d\alpha\d\beta\d\gamma. 
$$
Sometimes we also use $P$ to represent the body-fixed frame. 

The energy functional is now written as
\begin{align}
  F[f]=\int \d\nu f(P)\log f(P)
  +\frac{c}{2}\int\d\nu(P)\d\nu(P') 
    f(P)G(P,P')f(P')\label{FreeEngN}, 
\end{align}
with the normalization condition 
\begin{equation}\label{Consrv}
\int\d\nu f(P)=1. 
\end{equation}
The kernel function 
$G$ depends only on the relative orientation $\bar{P}=P^{-1}P'$, 
whose elements are denoted by 
$$
p_{ij}=\bm{m}_i\cdot\bm{m'}_j. 
$$

We have proved in \cite{SymmO} that the kernel function inherits the molecular 
symmetry. Hence we would like to explain first how the molecular symmetry is 
expressed mathematically. 
All the orthogonal transformations that leave a molecule invariant form a 
three-dimensional point group $\H$. If the molecule is achiral, 
it can be divided into proper and improper rotations $\H=\H^+\cup H^-$; 
if the molecule is chiral, then $\H=\H^+\subseteq SO(3)$. 
When acting a transformation $T\in \H$ on a molecule, 
its body-fixed frame $P$ is converted into another frame $PT$ (see Fig. \ref{frames}). 
If $T\in \H^+$, the new frame is also right-handed; 
and if $T\in\H^-$, the new frame is left-handed. 
The set $\H^+$ is a subgroup of $\H$, and $\H^-$ is its coset: 
$$
\H^-=\H^+J=J\H^+,\quad \forall J\in\H^-. 
$$
The inheritance of molecular symmetry is expressed as \cite{SymmO}
\begin{equation}
G(\bar{P})=G(T\bar{P}T'),\qquad T,T'\in \H^+\ \mbox{or}\ T,T'\in\H^-, \label{Gsymm}
\end{equation}
which leads to $f(PT)=f(P)$ for $T\in\H^+$. 
The equation of $f$ holds naturally as $PT$ and $P$ substantially represent the same orientation. 
\begin{figure}
\centering
  \includegraphics[width=.5\textwidth]{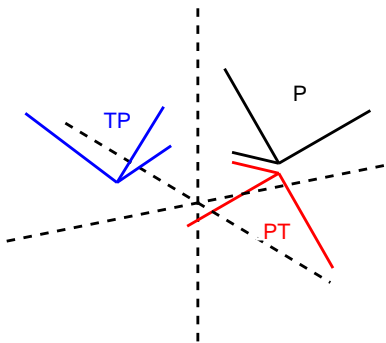}
  \caption{Rotation of an orthogonal frame. The dashed lines represent the frame $(\bm{e}_i)$. When rotated about $P$ itself, we obtain $PT$; when rotated about $(\bm{e}_i)$, we obtain $TP$. }
  \label{frames}
\end{figure}

We considered four different molecular symmetries: 
$\H=D_{\infty h},C_{\infty v},D_{2h},C_{2v}$. 
Here we use the Sch\"onflies notation: $C_n$ and $D_n$ represent the cyclic and 
dihedral group with $n$-fold rotation, respectively; $v$ and $h$ indicate a 
mirror plane parallel and vertical to the rotational axis, respectively. 
Some molecules of these symmetries are drawn in Fig. \ref{molecules}. 
Each of them is generated by inflating all the points in a set $A$ to a sphere 
of the same diameter $D$. For a rod, $A$ is a line segment; for a bent-core 
molecule, $A$ is a broken line with two equal segments. And for the other two 
molecules, we add the prefix 'sphero' to the shape of $A$: for an isosceles 
spherotriangle, $A$ is an isosceles triangle (including the interior and the boundary); for a spherocuboid, $A$ is a cuboid. 
These molecules are regarded fully rigid. 
The body-fixed orthonormal frame for each molecule 
is posed as drawn in Fig. \ref{molecules}, where $\bm{m}_1$ is always the 
rotational axis. 
Rods are of the $D_{\infty h}$ symmetry ($C_{\infty v}$ if with polarity). 
They possess axisymmetry about $\bm{m}_1$ and a mirror plane parallel to $\bm{m}_1$. 
If without polarity, they also have two-fold rotational symmetry about 
any direction vertical to $\bm{m}_1$ and a mirror plane vertical to $\bm{m}_1$. 
Bent-core molecules and isosceles spherotriangles are of the $C_{2v}$ symmetry. 
They possess two-fold rotational symmetry about $\bm{m}_1$ and a mirror plane 
parallel to $\bm{m}_1$. 
Spherocuboids are of the $D_{2h}$ symmetry. They possess two-fold rotational 
symmetries about $\bm{m}_i$, and mirror planes vertical to $\bm{m}_i$. 
Provided that the rotational axis is identical, the four point groups satisfy 
$$
D_{\infty h}\subseteq C_{\infty v}\subseteq D_{2h}\subseteq C_{2v}. 
$$
One could easily perceive the above relation by comparing the molecules in Fig. \ref{molecules}. 

The equality (\ref{Gsymm}) of $G$ determines its form if we require $G$ to be a quadratic 
polynomial of $p_{ij}$. It is also discussed in \cite{SymmO}. 
Let $\bm{m}_1$ coincide with the rotational axis. Then we have 
\begin{itemize}
\item $\H=D_{\infty h}$ generates 
\begin{equation}
cG=c_2(\bm{m}_1\cdot\bm{m'}_1)^2=c_2p^2_{11}. \label{Maier_Saupe}
\end{equation}
Let $\bm{m}=\bm{m}_1$. We have proved in \cite{SymmO} that the configuration space can be reduced to $S^2$. 
In this way we recover the Maier-Saupe kernel. 
\item $\H=C_{\infty v}$ generates 
\begin{equation}
  cG=c_1p_{11}+c_2p_{11}^2 \label{PolRod}, 
\end{equation}
which is used to examine the rods with polar magnetism \cite{Dipol}. 
\item $\H=D_{2h}$ generates 
\begin{equation}
  cG(\bar{P})=c_2p_{11}^2+c_3p_{22}^2+c_4(p_{12}^2+p_{21}^2), \label{BiKer}
\end{equation}
which is introduced by Starley \cite{Bi1} in a form linearly equivalent. 
The kernel function later received extensive numerical study by Virga et al
\cite{pre67,pre71,pre71_2,pre72,pre73}. 
Some dynamic models also use this kernel \cite{pre78,jr2009,ra2010,cms2010}. 
\item $\H=C_{2v}$ generates 
\begin{equation}\label{QuadApp}
  cG(\bar{P})=c_1p_{11}+c_2p_{11}^2+c_3p_{22}^2+c_4(p_{12}^2+p_{21}^2), 
\end{equation}
which is proposed in \cite{SymmO} and suitable for bent-core molecules. 
\end{itemize}
As these kernel functions are deduced from the molecular symmetry, 
we use the symmetry to name them. 
For instance, we name (\ref{PolRod}) the $C_{\infty v}$ kernel. 
Another thing that should be noted is that the $C_{2v}$ kernel (\ref{QuadApp}) 
can cover the other three kernels, for we may set some coefficients to zero. 
\begin{figure}
\centering
\includegraphics[width=\textwidth, keepaspectratio]{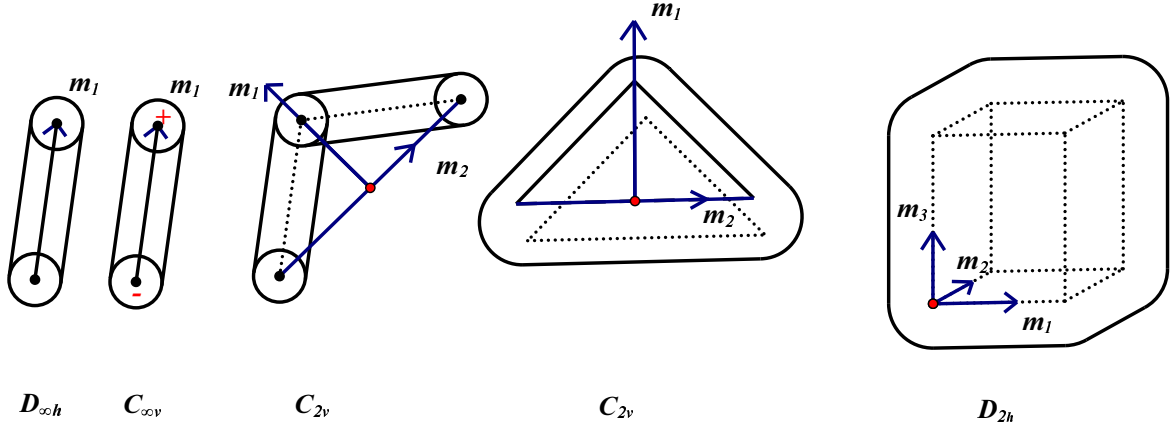}
\caption{Rigid molecules of different symmetries. From left to right: rod; rod with polarity; bent-core molecule; isosceles spherotriangle; spherocuboid.}\label{molecules}
\end{figure}

The symmetry of a phase, however, is expressed differently. 
When observing a phase, we are actually measuring some quantities in 
a space-fixed orthonormal frame. We say that a phase is symmetric under $T$, 
if the quantities are invariant when we rotate (possibly along with a 
reflection) all the molecules about the space-fixed frame with $T$. 
As these quantities are averages about $f$, 
we need to require that $f$ is invariant under this transformation. 
Note that the frame $P$ is transformed into $TP$ (see Fig. \ref{frames}). 
When $T$ is improper, without changing the orientation of a molecule, 
we may recover the new frame $TP$ to a right-handed one $TPJ$ with $J\in \H^-$. 
Therefore for any $P\in SO(3)$, it requires that 
\begin{align*}
&f(TP)=f(P), \quad\mbox{if }|T|=1, \\
&f(TPJ)=f(P),\ \forall J\in \H^-, \quad\mbox{if }|T|=-1.
\end{align*}
Denote by $\J_f$ the point group formed by such $T$. 
It depends on the choice of the space-fixed frame $\bm{e}_i$: 
when we rotate the frame $(\bm{e}_i)$ with $R^T\in SO(3)$, $\J_f$ becomes $R\J_f R^T$. 

To require the transmission of molecular symmetry to phases, 
it is necessary that $\H\subseteq R\J_f R^T$ for an $R$. 
The result for the Maier-Saupe kernel can be restated as
\begin{itemize}
\item For all local minima $f$ of the Maier-Saupe kernel (\ref{Maier_Saupe}), 
$D_{\infty h}\subseteq R\J_fR^T$ for an $R$. 
\end{itemize}
We would like to claim the following conjecture: 
\begin{itemize}
\item For all local minima $f$ of the $C_{\infty v}$ kernel (\ref{PolRod}), $C_{\infty v}\subseteq R\J_fR^T$ for an $R$. 
\item For all local minima $f$ of the $D_{2h}$ kernel (\ref{BiKer}), $D_{2h}\subseteq R\J_fR^T$ for an $R$. 
\item For all local minima $f$ of the $C_{2v}$ kernel (\ref{QuadApp}), $C_{2v}\subseteq R\J_fR^T$ for an $R$. 
\end{itemize}
The conjecture is supported by existing analytical and numerical results. 
For the $C_{\infty v}$ kernel, some relevant results are proved and the numerical results also suggest it \cite{Dipol}: although critical points are found $C_{\infty v}\nsubseteq R\J_fR^T$ for any $R$, all of them turn out to be unstable. 
For the $C_{2v}$ kernel, our earlier simulation in \cite{SymmO} of 
a special case suggests that $C_{2v}\subseteq D_{2h}\subseteq R\J_fR^T$ for an $R$. 
The relevant analytical results will be stated later in detail. 

In the current paper, we will prove the following result. 
\begin{theorem}\label{diag1}
For all local minima $f$ of the $C_{2v}$ kernel with $c_1\ge -1$, 
we have $D_{2h}\subseteq R\J_fR^T$ for an $R\in SO(3)$ if either of the following condition holds: 
\begin{enumerate}[(a)]
\item The quadratic form $c_2x^2+2c_4xy+c_3y^2$ is not negative definite; 
\item It is negative definite, but 
$$
\frac{c_4^2}{c_3}-c_2\le 2. 
$$
\end{enumerate}
\end{theorem}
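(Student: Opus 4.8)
The plan is to write the critical points explicitly, reduce the symmetry statement to a claim about two second--moment matrices, and prove that claim by a chain of \emph{symmetrization} steps, each replacing $f$ by the mean of $f$ and a rotated copy of itself. The starting point is the Euler--Lagrange equation for (\ref{FreeEngN})--(\ref{Consrv}): a critical point has the Gibbs form $f=Z^{-1}e^{-U}$ with mean field
\[
U(P)=c_1\,\bm m_1\!\cdot\!\bm v+\bm m_1^{\mathrm T}B_1\bm m_1+\bm m_2^{\mathrm T}B_2\bm m_2,
\]
where $\bm v=\langle\bm m_1\rangle_f$, $A^{(i)}=\langle\bm m_i\otimes\bm m_i\rangle_f$, $B_1=c_2A^{(1)}+c_4A^{(2)}$ and $B_2=c_4A^{(1)}+c_3A^{(2)}$; correspondingly the interaction energy is $\tfrac12\big(c_1|\bm v|^2+c_2\|A^{(1)}\|^2+c_3\|A^{(2)}\|^2+2c_4\langle A^{(1)},A^{(2)}\rangle\big)$. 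As $f$ is determined by $U$, the group $\J_f$ consists exactly of the $T$ (with the handedness fixed by some $J\in\H^-$ in the improper case) that fix $\bm v$ and commute with $B_1$ and $B_2$. Hence, if $\bm v=0$ and $B_1,B_2$ admit a common orthonormal eigenframe, then in that frame $U$ is a linear combination of the squared components of $\bm m_1$ and $\bm m_2$, so it is unchanged by $P\mapsto DP$ for every diagonal $D$ with $D^2=I$; for $\det D=-1$ one composes with a mirror $J\in\H^-$, which does not affect $U$ because $p_{33}$ never occurs in the $C_{2v}$ kernel. These eight matrices form a copy of $D_{2h}$, and the theorem follows with $R$ the change of frame. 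It therefore remains to show that at a local minimum (i)~$\bm v=0$ and (ii)~$A^{(1)}A^{(2)}=A^{(2)}A^{(1)}$.

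For (i) I compare $f$ with $\bar f=\tfrac12(f+f\circ K)$, where $K$ is the rotation by $\pi$ about $\bm m_3$ (it reverses $\bm m_1$ and $\bm m_2$, leaves the $C_{2v}$ kernel invariant, but is not a molecular symmetry). This sends $\bm v\mapsto0$ and leaves $A^{(1)},A^{(2)}$ untouched, so $F[\bar f]-F[f]=(S[\bar f]-S[f])-\tfrac{c_1}2|\bm v|^2$. Convexity of $t\log t$ together with Pinsker's inequality gives $S[f]-S[\bar f]\ge\tfrac18\|f-f\circ K\|_1^2$, while $|\bm v|\le\tfrac12\|f-f\circ K\|_1$; hence $F[\bar f]-F[f]\le-\tfrac{1+c_1}2|\bm v|^2$, which forces $\bm v=0$ once $c_1\ge-1$ (at $c_1=-1$ one uses that a nontrivial mixture of two distinct Gibbs states is not a Gibbs state, hence not a minimizer, so $f=f\circ K$, whence again $\bm v=0$).

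For (ii) I use the same device with $\bar f=\tfrac12(f+f\circ S_{\bm n})$, where $S_{\bm n}=2\bm n\bm n^{\mathrm T}-I$ is the rotation by $\pi$ about a unit vector $\bm n$; then $A^{(i)}\mapsto S_{\bm n}A^{(i)}S_{\bm n}$ and $S_{\bm n}MS_{\bm n}-M=-2(\bm n\bm w_M^{\mathrm T}+\bm w_M\bm n^{\mathrm T})$ with $\bm w_M=(I-\bm n\bm n^{\mathrm T})M\bm n$. Choosing $\bm n$ to be an eigenvector of $B_2$ makes $\bm w_{B_2}=0$, which lets one express $\bm w_{A^{(2)}}$ through $\bm w_{A^{(1)}}$ and gives
\[
\tfrac{c}{2}\big(I[\bar f]-I[f]\big)=\Big(\tfrac{c_4^2}{c_3}-c_2\Big)\,|\bm w_{A^{(1)}}|^2;
\]
combined with $S[\bar f]-S[f]\le-2\,|\bm w_{A^{(1)}}|^2$ (Pinsker once more, using $|\bm w_{A^{(1)}}|\le\tfrac14\|f-f\circ S_{\bm n}\|_1$) this yields $F[\bar f]-F[f]\le\big(\tfrac{c_4^2}{c_3}-c_2-2\big)\,|\bm w_{A^{(1)}}|^2$. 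Under hypothesis (b) the bracket is $\le0$, so every eigenvector of $B_2$ is also an eigenvector of $A^{(1)}$ and of $A^{(2)}$, and running $\bm n$ over all of them makes $A^{(1)},A^{(2)}$ simultaneously diagonalizable. When (a) holds with $c_2\ge0$ or $c_3\ge0$ one argues identically with $\bm n$ an eigenvector of $A^{(2)}$, resp.\ $A^{(1)}$, producing effective coefficient $c_2$, resp.\ $c_3$; in the remaining subcase of (a), namely $c_2,c_3<0$ with $c_4^2\ge c_2c_3$, one has $\tfrac{c_4^2}{c_3}-c_2\le0$, so the $B_2$--argument applies there as well.

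The step I expect to be the main obstacle is the bookkeeping in (ii): one must split according to the signs of $c_2,c_3$ and pick the correct axis $\bm n$ — an eigenvector of $A^{(1)}$, of $A^{(2)}$, or of the field tensor $B_2$ — so that the coefficient produced by the interaction term is exactly $c_4^2/c_3-c_2$ (a Schur complement of the form $c_2x^2+2c_4xy+c_3y^2$), and then verify that the constant coming from Pinsker's inequality is sharp enough to yield the threshold $2$ rather than something larger. The degenerate situations — repeated eigenvalues of $B_2$, the boundary cases $c_1=-1$ and $c_4^2/c_3-c_2=2$, and $c_4=0$ — need separate but routine treatment.
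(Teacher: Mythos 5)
Your route is genuinely different from the paper's. The paper works at the level of the self-consistency (Euler--Lagrange) equations: after quoting $\bm{p}=0$ from Lemma \ref{ex}, it chooses the eigenframe of a suitable combination of $Q_1,Q_2$ (the combination attached to a nonnegative eigenvalue of the quadratic form in case (a), or $d_1Q_1+d_2Q_2$ with $-c_2=\epsilon+d_1^2$, $-c_3=d_2^2$, $-c_4=d_1d_2$ in case (b)) and kills the off-diagonal entries through the identity produced by the reflection $P\mapsto J_1PJ_3$, which yields an integral with a $\sinh$ factor of definite sign in case (a) and, in case (b), a bound via $x\tanh x<x^2$ and $m_{11}^2(1-m_{11}^2)\le\tfrac14$. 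You instead symmetrize the density and compare energies, bounding the entropy gain below by Pinsker/Jensen--Shannon and computing the interaction change as a Schur complement. I checked your constants: with the correct normalization $\tfrac12\big(c_1|\bm{p}|^2+c_2|Q_1|^2+c_3|Q_2|^2+2c_4Q_1:Q_2\big)$ one indeed gets $S[f]-S[\bar f]\ge\tfrac18\|f-f\circ S\|_1^2$, $|\bm{w}_{A^{(1)}}|\le\tfrac14\|f-f\circ S_{\bm n}\|_1$, and interaction change $\big(\tfrac{c_4^2}{c_3}-c_2\big)|\bm{w}_{A^{(1)}}|^2$ for $\bm n$ an eigenvector of $B_2$, so your thresholds reproduce $c_1\ge-1$ and $\tfrac{c_4^2}{c_3}-c_2\le2$ exactly, and your case split of (a) is exhaustive. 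The paper's argument applies to every solution of (\ref{SCp})--(\ref{SCQ2}) (all critical points with $\bm{p}=0$) and absorbs the boundary value $\epsilon=2$ automatically through the strict inequality $x\tanh x<x^2$; yours has a transparent energetic meaning and re-derives $\bm{p}=0$ instead of quoting it.

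The one genuine gap is the local-versus-global issue: your comparison states $\bar f=\tfrac12(f+f\circ S)$ are not small perturbations of $f$, so as written the inequality $F[\bar f]<F[f]$ only rules out asymmetric \emph{global} minimizers, while the theorem concerns local minima. This is repairable without losing the constants: interpolate, $f_t=(1-t)f+tf\circ S$, note that the interaction change scales as $4t(1-t)$ times its value at $t=\tfrac12$, and use the skewed Pinsker bound $S[f_t]-S[f]\le-\tfrac{t(1-t)}{2}\|f-f\circ S\|_1^2$; then $F[f_t]-F[f]\le 4t(1-t)\big(\tfrac{c_4^2}{c_3}-c_2-2\big)|\bm{w}_{A^{(1)}}|^2$ (and $\le-2t(1-t)(1+c_1)|\bm{p}|^2$ in step (i)), contradicting $L^1$-local minimality for small $t$ -- but this step must be stated, it is not automatic. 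Relatedly, at the boundary $\tfrac{c_4^2}{c_3}-c_2=2$ (and at $c_1=-1$) your bracket vanishes and you get only a non-strict inequality; to close it you need strictness of $\|f-f\circ S_{\bm n}\|_1>4|\bm{w}_{A^{(1)}}|$ whenever $\bm{w}_{A^{(1)}}\ne0$, which does hold because a Gibbs density is continuous and positive so $|f-f\circ S_{\bm n}|$ cannot concentrate on the null set where $|(\bm u\cdot\bm{m}_1)(\bm n\cdot\bm{m}_1)|=\tfrac12$ -- spell this out rather than calling it routine. The repeated-eigenvalue case of $B_2$ is indeed harmless (a two-dimensional eigenspace of $B_2$ all of whose vectors are eigenvectors of $A^{(1)}$ forces $A^{(1)}$ to be scalar on it), and your final passage from ``$\bm{p}=0$, $Q_1,Q_2$ simultaneously diagonal'' to $D_{2h}\subseteq R\J_fR^T$ is exactly Lemma \ref{condition}.
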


The rest of the paper is organized as follows. In Sec. \ref{cond}, we derive the 
equivalent conditions of $\H\subseteq R\J_fR^T$ for the four symmetries. 
In Sec. \ref{proof}, we give the proof and application of the theorem. 
A concluding remark is given in Sec. \ref{concl}. 

\section{The equivalent condition\label{cond}}

Before continuing our discussion on the phase symmetry, 
we write down the critical points of the energy functional. 
Generally, the Euler-Lagrange equation of (\ref{FreeEngN}) yields
\begin{equation}\label{Boltz}
  f(P)=\frac{1}{Z}\exp\big(-W(P)\big), 
\end{equation}
where
\begin{equation}\label{DefU}
  W(P)=c\int\d\nu(P')G(\bar{P})f(P'), 
\end{equation}
and 
\begin{equation}
  Z=\int\d\nu(P)\exp\big(-W(P)\big).
\end{equation}
If the kernel function is a polynomial of $p_{ij}$, 
the Euler-Lagrange equation can be reduced to a few equations of tensors. 
With the kernel (\ref{QuadApp}), we can write the energy functional as
$$
F=\int \d\nu f\log f+c_1|\bm{p}|^2+c_2|Q_2|^2+c_3|Q_2|^2+2c_4Q_1:Q_2, 
$$
where $\bm{p}$, $Q_1$ and $Q_2$ are angular moments, 
$$
\bm{p}=\left<\bm{m}_1\right>,\quad Q_1=\left<\bm{m}_1\bm{m}_1\right>, \quad
Q_2=\left<\bm{m}_2\bm{m}_2\right>, 
$$
and their components are denoted by $p_i$ and $Q_{\alpha,ij}$ for $i,j=1,2,3$. 
Here we use the notation $\left<u\right>=\int\d\nu~u(P)f(P)$, and dots for 
tensor contraction. 
And $W(P)$ can be written as 
\begin{align}
W(P)=&c_{1}\bm{p}\cdot\bm{m}_1
+\big(c_{2}Q_1+c_{4}Q_2\big):\bm{m}_1\bm{m}_1
+\big(c_{3}Q_2+c_{4}Q_1\big):\bm{m}_2\bm{m}_2.\label{weight}
\end{align}
The tensors shall satisfy the following equations, 
\begin{align}
  \bm{p}&=\frac{1}{Z}\int\d\nu(P')\bm{m'}_1\exp\big(-W(P')\big), \label{SCp}\\
  Q_1&=\frac{1}{Z}\int\d\nu(P')\bm{m'}_1\bm{m'}_1\exp\big(-W(P')\big), \label{SCQ1}\\
  Q_2&=\frac{1}{Z}\int\d\nu(P')\bm{m'}_2\bm{m'}_2\exp\big(-W(P')\big). \label{SCQ2}
\end{align}

We write down the elements of $\H=\H^+\cup\H^-$ for the four point groups. 
Here we suppose that the rotational axis coincides with $\bm{m}_1$. 
\begin{align}
D_{\infty h}&=D_{\infty}\cup D_{\infty}J_3=\left\{\left(
\begin{array}{ccc}
  \pm 1 & 0 & 0\\
  0 & \pm\cos\theta & -\sin\theta\\
  0 & \pm\sin\theta & \cos\theta
\end{array}
\right)\right\}\bigcup
\left\{\left(
\begin{array}{ccc}
  \pm 1 & 0 & 0\\
  0 & \pm\cos\theta & \sin\theta\\
  0 & \pm\sin\theta & -\cos\theta
\end{array}
\right)\right\}, \label{D_infh}\\
C_{\infty v}&=C_{\infty}\cup C_{\infty}J_3=\left\{\left(
\begin{array}{ccc}
  1 & 0 & 0\\
  0 & \cos\theta & -\sin\theta\\
  0 & \sin\theta & \cos\theta
\end{array}
\right)\right\}\bigcup
\left\{\left(
\begin{array}{ccc}
  1 & 0 & 0\\
  0 & \cos\theta & \sin\theta\\
  0 & \sin\theta & -\cos\theta
\end{array}
\right)\right\}, \label{C_infv}\\
D_{2h}&=\{I,R_1,R_2,R_3\}\cup\{J_3,J_3R_1,J_3R_2,J_3R_3\}, \label{D_2h}\\
C_{2v}&=\{I,R_1\}\cup\{J_3,J_3R_1\}. \label{C_2v}
\end{align}
In the above, 
$$
R_1=\mbox{diag}(1,-1,-1),\quad R_2=\mbox{diag}(-1,1,-1),\quad R_3=\mbox{diag}(-1,-1,1),\quad J_3=\mbox{diag}(1,1,-1). 
$$
It is easy to verify for the four symmetries that if $W$ is written in 
(\ref{weight}), then it already holds that $f(PT)=f(P)$ for $T\in \H^+$. 
For example, for the $C_{\infty v}$ kernel, we have 
$$
W(P)=c_1\bm{p}\cdot\bm{m}_1+c_2Q_1:\bm{m}_1\bm{m}_1. 
$$
From $\bm{m}_1(PT)=\bm{m}_1(P)$ for $T\in C_{\infty}$, we deduce that $W(PT)=W(P)$. 

Next we discuss the condition for $\H\subseteq R\J_fR^T$. 
\begin{lemma}\label{condition}
Let the four point groups be written in (\ref{D_infh})-(\ref{C_2v}). 
  \begin{enumerate}[a)]
  \item For the $D_{\infty h}$ kernel (\ref{Maier_Saupe}), $D_{\infty h}\subseteq R\J_fR^T$ if and only if $RQ_1R^T$ is diagonal with $(RQ_1R^T)_{22}=(R^TQ_1R)_{33}$. 
  \item For the $C_{\infty v}$ kernel (\ref{PolRod}), $C_{\infty v}\subseteq R\J_fR^T$ if and only if $RQ_1R^T$ is diagonal with $(RQ_1R^T)_{22}=(RQ_1R^T)_{33}$, and 
$(R\bm{p})_2=(R\bm{p})_3=0$. 
  \item For the $D_{2h}$ kernel (\ref{BiKer}), $D_{2h}\subseteq R\J_fR^T$ if and only if both $RQ_1R^T$ and $RQ_2R^T$ are diagonal. 
  \item For the $C_{2v}$ kernel (\ref{QuadApp}), $C_{2v}\subseteq R\J_fR^T$ if and only if $RQ_1R^T$, $RQ_2R^T$ are diagonal and $(R\bm{p})_2=(R\bm{p})_3=0$. 
  \end{enumerate}
\end{lemma}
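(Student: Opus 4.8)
The starting point is that at a critical point $f=Z^{-1}\exp(-W)$ is a strictly monotone function of $W$, so a space-fixed transformation preserves $f$ if and only if it preserves $W$; and $W$ is given explicitly by \eqref{weight} in terms of the angular moments $\bm p=\langle\bm m_1\rangle$, $Q_1=\langle\bm m_1\bm m_1\rangle$, $Q_2=\langle\bm m_2\bm m_2\rangle$. I would first dispose of the improper clause in the definition of $\J_f$. Since $\H^+$ has index two in $\H$ and $J_3\in\H^-$ for each of the four groups, $\H^-=J_3\H^+$, so every $J\in\H^-$ equals $J_3T'$ with $T'\in\H^+$; combining this with the right-$\H^+$-invariance $f(QT')=f(Q)$ already available for $W$ of the form \eqref{weight}, the requirement ``$f(TPJ)=f(P)$ for all $J\in\H^-$'' collapses to the single relation $f(TPJ_3)=f(P)$.

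Next I would track how $W$ transforms. Writing $P$ by its columns $\bm m_i$, one has $\bm m_i(TP)=T\bm m_i$, and because right multiplication by $J_3=\mathrm{diag}(1,1,-1)$ only flips the third column, also $\bm m_1(TPJ_3)=T\bm m_1$ and $\bm m_2(TPJ_3)=T\bm m_2$. Substituting into \eqref{weight} and using $A:(T\bm m_1)(T\bm m_1)=(T^TAT):\bm m_1\bm m_1$, both $W(TP)$ and $W(TPJ_3)$ are exactly the functional obtained from $W$ by the replacement $(\bm p,Q_1,Q_2)\mapsto(T^T\bm p,\,T^TQ_1T,\,T^TQ_2T)$. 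Consequently, if $T$ fixes the moments that actually occur in the relevant kernel's $W$ --- only $Q_1$ for $D_{\infty h}$; $\bm p$ and $Q_1$ for $C_{\infty v}$; $Q_1$ and $Q_2$ for $D_{2h}$; all of $\bm p,Q_1,Q_2$ for $C_{2v}$ --- then $W$, hence $f$, is preserved, i.e.\ $T\in\J_f$.

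For the converse I would argue directly from the moment integrals rather than through harmonic analysis on $SO(3)$. If $T\in\J_f$ then $f\circ\phi=f$ with $\phi(P)=TP$ in the proper case and $\phi(P)=TPJ_3$ in the improper case, and $\phi$ preserves $\d\nu$ (the Haar measure on $O(3)$ is bi-invariant, and $\phi$ is a single left translation composed with a single right translation, carrying $SO(3)$ bijectively to itself). Changing variables in \eqref{SCp}--\eqref{SCQ2} and using $\bm m_\alpha(\phi^{-1}(Q))=T^{-1}\bm m_\alpha(Q)$ for $\alpha=1,2$ (again because $J_3$ on the right leaves the first two columns alone) yields $T\bm p=\bm p$, $TQ_1T^T=Q_1$, $TQ_2T^T=Q_2$. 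Hence $T\in\J_f$ is \emph{equivalent} to $T$ fixing the moments listed above. Rewriting $\H\subseteq R\J_fR^T$ as $R^T\H R\subseteq\J_f$, i.e.\ as ``$T(R\bm p)=R\bm p$ and $T(RQ_\alpha R^T)T^T=RQ_\alpha R^T$ for every $T\in\H$ and every relevant moment,'' reduces the lemma to a purely linear-algebraic question.

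Finally I would identify, for each group in \eqref{D_infh}--\eqref{C_2v}, the vectors fixed by $\H$ and the symmetric matrices fixed under conjugation by $\H$: conjugation by the planar rotations $\mathrm{diag}(1,R_\theta)\in C_\infty$ forces a symmetric matrix into the form $\mathrm{diag}(a,b,b)$ and a fixed vector to lie along $\bm e_1$, and the reflections in $C_{\infty v}$ (resp.\ the further elements of $D_{\infty h}$) impose nothing more --- this gives a) and b); conjugation by $R_1$ and $R_2$ annihilates all off-diagonal entries while $R_1,R_2,R_3$ fix only $0$ --- this gives c); and for $C_{2v}$, $R_1$ kills $M_{12},M_{13}$ and $J_3$ kills $M_{13},M_{23}$, forcing $RQ_\alpha R^T$ diagonal, while $R_1$ forces $R\bm p$ along $\bm e_1$ --- this gives d). Each reverse implication (that these normal forms are genuinely $\H$-invariant) is immediate. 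The main obstacle here is bookkeeping rather than depth: verifying $\phi_*\d\nu=\d\nu$ when $\phi$ crosses between the two components of $O(3)$, and cleanly reducing the ``$f(TPJ)=f(P)$ for all $J\in\H^-$'' clause to a single $J_3$, both of which require careful tracking of left versus right cosets and of the normality of $\H^+$ in $\H$.
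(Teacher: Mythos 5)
Your proposal is correct, and its skeleton matches the paper's: reduce invariance of $f$ to invariance of $W$, translate that into invariance of the moment tensors under the (conjugated) group elements, and finish with the elementary identification of the vectors and symmetric matrices fixed by each group in (\ref{D_infh})--(\ref{C_2v}). The one genuine difference is how the tensor-invariance (``only if'') step is obtained. The paper reads it off from the identity $W(R^TTRP)=W(P)$ together with the explicit form (\ref{weight}): equating the two expressions as functions on $SO(3)$ gives invariance of the combinations $c_2Q_1+c_4Q_2$ and $c_3Q_2+c_4Q_1$, which must then be disentangled into invariance of $Q_1$ and $Q_2$ separately (implicitly using nondegeneracy of the coefficients, e.g.\ $c_2c_3\ne c_4^2$, and $c_2\ne 0$ in the Maier--Saupe case). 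You instead change variables in the self-consistency integrals (\ref{SCp})--(\ref{SCQ2}), which yields $T\bm{p}=\bm{p}$ and $TQ_\alpha T^T=Q_\alpha$ for every $T\in\J_f$ directly, uniformly for all four kernels and with no condition on the coefficients; this is slightly more robust, and it also makes transparent why only the moments actually appearing in the relevant $W$ enter the stated criterion. You are additionally explicit about two points the paper passes over silently: the reduction of the clause ``$f(TPJ)=f(P)$ for all $J\in\H^-$'' to the single element $J_3$ via $\H^-=J_3\H^+$ and the right-$\H^+$-invariance of $f$, and the invariance of $\d\nu$ under $P\mapsto TPJ_3$ when $T$ is improper. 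Both arguments conclude with the same linear algebra, and your normal-form computations agree with the statement of the lemma.
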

\begin{proof}
\begin{enumerate}[a)]
\item
For the Maier-Saupe kernel, let $T\in D_{\infty h}$. 
Taking $W(R^TTRP)=W(P)$ into $W(P)=c_2Q_1:\bm{m}_1\bm{m}_1$, we deduce that 
$$
(R^TT^TR)Q_1(R^TTR)=Q_1, \quad T\in D_{\infty}. 
$$
Or equivalently, 
$$
T^T(RQ_1R^T)T=RQ_1R^T, \quad T\in D_{\infty}. 
$$
Hence $RQ_1R^T$ shall be diagonal with $(RQ_1R^T)_{22}=(RQ_1R^T)_{33}$. 
\item
For the $C_{\infty v}$ kernel, we have $W(P)=c_1\bm{p}\cdot\bm{m}_1+c_2Q_1:\bm{m}_1\bm{m}_1$, leading to 
$$
T^T(R\bm{p})=R\bm{p},\ T^T(RQ_1R^T)T=RQ_1R^T, \quad T\in C_{\infty}. 
$$
In this case, $RQ_1R^T$ shall still be diagonal with $(RQ_1R^T)_{22}=(RQ_1R^T)_{33}$, 
and only the first component of $R\bm{p}$ can be nonzero. 
\item
For the $D_{2h}$ kernel, we have 
$W(P)=\big(c_{2}Q_1+c_{4}Q_2\big):\bm{m}_1\bm{m}_1
+\big(c_{3}Q_2+c_{4}Q_1\big):\bm{m}_2\bm{m}_2$. 
It can be deduced that 
$$T^T(RQ_1R^T)T=RQ_1R^T,\ T^T(RQ_2R^T)T=RQ_2R^T,\quad T\in\{R_1,R_2,R_3\}. $$
It follows that the off-diagonal elements of $R^TQ_iR$ equal to zero. 
\item
For the $C_{2v}$ kernel, we have 
$$T^T(R\bm{p})=R\bm{p},\ T^T(RQ_1R^T)T=RQ_1R^T,\ T^T(RQ_2R^T)T=RQ_2R^T,\quad T=R_1,J_3. $$
It requires that $(R\bm{p})_2=(R\bm{p})_3=0$ and that $RQ_iR^T$ are diagonal. 
\end{enumerate}
On the other hand, if the tensors meet the above conditions, 
it is easy to verify that $\H\subseteq R\J_fR^T$. 
\end{proof}

Recall that $R$ stands for our choice of the space-fixed frame $(\bm{e}_i)$. 
Hence it is sufficient that there exists a frame 
such that the tensors satisfy the conditions in the above Lemma. 
For example, that $RQ_1R^T$ and $RQ_2R^T$ are both diagonal 
means that we can choose a frame $(\bm{e}_i)$ in which both 
$Q_1$ and $Q_2$ are diagonalized. 
In the following lemma, we summarize the existing results 
in the language of the tensors. 
\begin{lemma}\label{ex}
Let $(\bm{p},Q_1,Q_2)$ be the solution of (\ref{SCp})-(\ref{SCQ2}). 
\begin{enumerate}[(i)]
\item For the Maier-Saupe kernel, two of the eigenvalues of $Q_1$ are equal \cite{AxiSymMS, AxiSym2, AxiSym3}. 
\item For the $C_{\infty v}$ and the $C_{2v}$ kernel, if $c_1\ge -1$, 
then $\bm{p}=0$ \cite{Dipol,SymmO}. 
\item For the $C_{\infty v}$ kernel, $\bm{p}$ is an eigenvector of $Q_1$ \cite{Dipol}. 
For the $C_{2v}$ kernel, if there exists a frame $(\bm{e}_i)$ in which both 
$Q_1$ and $Q_2$ are diagonalized, then $\bm{p}$ is an eigenvector of $Q_i$ \cite{SymmO}. 
\end{enumerate}
\end{lemma}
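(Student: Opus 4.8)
The three parts are moment-level restatements of the cited axisymmetry and polar-order theorems, so the plan is to translate each into a statement about the self-consistent fields in (\ref{SCp})--(\ref{SCQ2}) and re-derive it from the structure of $W$. The common starting point is that for the $D_{\infty h}$ and $C_{\infty v}$ kernels the weight $W$ in (\ref{weight}) depends on $P$ only through $\bm m_1$, so the problem collapses to a single distribution $f(\bm m)\propto\exp(-W(\bm m))$ on $S^2$ with $\bm m=\bm m_1$; this is the reduction already recorded after (\ref{Maier_Saupe}), and it is what makes $Q_1$ and $\bm p$ the only relevant moments for (i) and (iii).

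For (i) I would diagonalize $Q_1=\mathrm{diag}(\lambda_1,\lambda_2,\lambda_3)$ in its own eigenframe, so that $f(\bm m)\propto\exp(-c_2\sum_k\lambda_k m_k^2)$ and the self-consistency (\ref{SCQ1}) becomes the scalar system $\lambda_i=\langle m_i^2\rangle$. The goal is to exclude three distinct eigenvalues. Swapping $m_i\leftrightarrow m_j$ in the numerator of $\langle m_i^2-m_j^2\rangle$ and comparing with the denominator gives $\operatorname{sgn}(\lambda_i-\lambda_j)=\operatorname{sgn}(-c_2)\operatorname{sgn}(\lambda_i-\lambda_j)$, which already forces $Q_1=\tfrac13 I$ when $c_2>0$; but in the nematic regime $c_2<0$ this identity is vacuous, and there I would follow \cite{AxiSymMS,AxiSym2,AxiSym3}: express the partition function through one-dimensional integrals in the two effective parameters $\lambda_1-\lambda_3$ and $\lambda_2-\lambda_3$, and analyse the resulting two-parameter fixed-point map to show every solution has a repeated eigenvalue. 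This last analysis is the main obstacle of the whole lemma, being precisely the classical Maier--Saupe axisymmetry theorem.

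For (ii) I would use the axisymmetry from (i) (and its polar analogue for $C_{\infty v}$, with the $C_{2v}$ case reduced likewise once $Q_1,Q_2$ share an eigenframe) to align everything about $\hat{\bm p}$ and reduce $\bm p$ to its magnitude. Writing $t=\bm m\cdot\hat{\bm p}$ and $p=|\bm p|$, equation (\ref{SCp}) becomes a scalar fixed point $p=\Phi(p)$ with $\Phi$ odd and $\Phi(0)=0$. Since $Q_1$ is even in $p$, differentiation gives $\Phi'(0)=-c_1\langle t^2\rangle_0$, and the bound $\langle t^2\rangle_0\le 1$ yields $\Phi'(0)\le -c_1\le 1$ exactly when $c_1\ge -1$; the constant in the hypothesis is nothing but this second-moment bound. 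Combined with the concavity of $\Phi$ on $[0,\infty)$ established in \cite{Dipol,SymmO}, this gives $\Phi(p)<p$ for $p>0$, so the only fixed point is $p=0$, i.e. $\bm p=0$.

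For (iii) the route differs between the two kernels. For $C_{\infty v}$ the critical points are axisymmetric about a single axis $\bm n$ (the polar analogue of (i), proved in \cite{Dipol}); then $\langle\bm m_1\rangle$ is invariant under rotations about $\bm n$ and hence parallel to $\bm n$, while $Q_1$ is axisymmetric about $\bm n$ and so has $\bm n$ as an eigenvector, giving $Q_1\bm p\parallel\bm p$ at once. For $C_{2v}$, under the hypothesis that $Q_1$ and $Q_2$ are simultaneously diagonalized in a frame $(\bm e_i)$, I would exploit that each $R_k\in\{R_1,R_2,R_3\}$ acts on the space frame by $P\mapsto R_kP$, preserves $\mathrm d\nu$, and flips the signs of exactly two components of $\bm m_1$; since every quadratic term of $W$ in (\ref{weight}) is even in those components while the polar term $c_1\bm p\cdot\bm m_1$ is not, the partition function is invariant under flipping the matching two components of $\bm p$. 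Differentiating $\log Z$ then shows that on each coordinate axis the two transverse components of $\bm p$ vanish, so the coordinate axes are invariant under the fixed-point map; the claim thus reduces to excluding genuinely off-axis branches, which is the remaining content established in \cite{SymmO}. When $\bm p=0$ the statement is trivial.
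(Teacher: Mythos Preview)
The paper does not supply its own proof of this lemma: each item is simply quoted from the cited references, and the lemma is then used as input to the proof of Theorem~\ref{diag1}. So there is no in-paper argument to compare against; the relevant question is whether your sketches are internally consistent with how the paper uses the lemma.

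For (i) and (iii) your outline is a reasonable gloss, and where you defer the hard steps (the two-parameter fixed-point analysis for Maier--Saupe, the exclusion of off-axis $\bm p$) you defer to exactly the sources the paper cites.

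Part (ii), however, has a genuine logical gap. Your reduction to a scalar fixed point $p=\Phi(p)$ requires first aligning everything about $\hat{\bm p}$, and for that you invoke axisymmetry in the $C_{\infty v}$ case and a shared eigenframe of $Q_1,Q_2$ in the $C_{2v}$ case. Neither is available at this point. The paper states explicitly, in the paragraph immediately following the lemma, that for the $C_{\infty v}$ kernel the equality of the two eigenvalues of $Q_1$ orthogonal to $\bm p$ when $\bm p\neq 0$ is still \emph{open}; so the ``polar analogue of (i)'' you appeal to does not exist. For $C_{2v}$, the shared eigenframe is precisely the conclusion of Theorem~\ref{diag1}, whose proof \emph{begins} with ``We know that $\bm p=0$ from Lemma~\ref{ex}.'' Your argument for (ii) is therefore circular in the $C_{2v}$ case and rests on an unproved conjecture in the $C_{\infty v}$ case. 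The results in \cite{Dipol,SymmO} obtain $\bm p=0$ for $c_1\ge -1$ directly from the self-consistency relation \eqref{SCp} without any prior alignment of $Q_1$ or $Q_2$; that alignment-free step is what you would need to supply.
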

We compare Lemma \ref{ex} with Lemma \ref{condition}. 
For the $D_{\infty h}$ kernel, it is completely proved; 
for the $C_{\infty v}$ kernel, it is still an open problem that if $\bm{p}\ne 0$, 
then $Q_{1}$ has two equal eigenvalues in the subspace vertical to $\bm{p}$; 
for the other two kernels, 
we need to prove that there exists a frame $(\bm{e}_i)$ in which 
both $Q_1$ and $Q_2$ are diagonalized. 

\section{Proof and application\label{proof}}
In fact, we have proposed in \cite{SymmO} a very special condition of the 
coefficients such that there exists a frame $(\bm{e}_i)$ in which 
both $Q_1$ and $Q_2$ are diagonalized. 
But the condition is too strong. 
In Theorem \ref{diag1}, we extend the condition so that it can be applied to 
some molecules. 
\renewcommand{\proofname}{Proof of Theorem 1}
\begin{proof}
We know that $\bm{p}=0$ from Lemma \ref{ex}. Therefore
$$
W(P)=\big(c_{2}Q_1+c_{4}Q_2\big):\bm{m}_1\bm{m}_1
+\big(c_{3}Q_2+c_{4}Q_1\big):\bm{m}_2\bm{m}_2.
$$

\begin{enumerate}[(a)]
\item
  Write the quadratic form in the standard form, 
  $$ 
  c_2x^2+c_3y^2+2c_4xy=\lambda_1(d_1x+d_2y)^2+\lambda_2(d_2x-d_1y)^2. 
  $$
  We may suppose that $\lambda_2\ge 0$. 
  Hence
  $$
  W(P)=\lambda_1(d_1Q_1+d_2Q_2):(d_1\bm{m}_1\bm{m}_1+d_2\bm{m}_2\bm{m}_2)
  +\lambda_2(d_2Q_1-d_1Q_2):(d_2\bm{m}_1\bm{m}_1-d_1\bm{m}_2\bm{m}_2). 
  $$
  Denote 
  $$
  \tilde{Q}_1=d_1Q_1+d_2Q_2,\quad \tilde{Q}_2=d_2Q_1-d_1Q_2. 
  $$
  and 
  $$
  \tilde{q}_1=d_1\bm{m}_1\bm{m}_1+d_2\bm{m}_2\bm{m}_2,
  \quad \tilde{q}_2=d_2\bm{m}_1\bm{m}_1-d_1\bm{m}_2\bm{m}_2. 
  $$
  Select a space-fixed frame such that $\tilde{Q}_1$ is diagonal. 
  We will show that $\tilde{Q}_2$ is also diagonal in this frame. 
  Let $J_1=\mbox{diag}(-1,1,1)$, $J_3=\mbox{diag}(1,1,-1)$. then 
  $$
  m_{i1}(J_1PJ_3)=-m_{i1}(P),m_{i2}(J_1PJ_3)=m_{i2}(P),m_{i3}(J_1PJ_3)=m_{i3}(P),\quad i=1,2. 
  $$
  Let
  $$
  W_1(P)=\lambda_1\tilde{Q}_1:\tilde{q}_1
  +\lambda_2(\tilde{Q}_{2,ii}\tilde{q}_{2,ii}+2\tilde{Q}_{2,23}\tilde{q}_{2,23}). 
  $$
  We have $W_1(J_1PJ_3)=W_1(P)$. Therefore
  \begin{align*}
    &\tilde{Q}_{2,12}^2+\tilde{Q}_{2,13}^2\\
    =&\frac{\int \d\nu \exp(-W_1(P))
      (\tilde{Q}_{2,12}\tilde{q}_{2,12}+\tilde{Q}_{2,13}\tilde{q}_{2,13})
      \sinh(-2\lambda_2(Q_{1,12}\tilde{q}_{2,12}+Q_{1,13}\tilde{q}_{2,13}))}
    {\int \d\nu \exp(-W_1(P))
      \cosh(-2\lambda_2(Q_{1,12}\tilde{q}_{2,12}+Q_{1,13}\tilde{q}_{2,13}))}. 
  \end{align*}
  Since $\lambda_2\ge 0$, the right side $\le 0$. 
  Similarly, we can prove that $\tilde{Q}_{2,23}=0$. 
  Thus $\tilde{Q}_2$ is diagonal. 
\item 
  From the condition, we can find $d_1,d_2$ and $0<\epsilon\le 2$ such that
  $$-c_2=\epsilon+d_1^2,~-c_3=d_2^2,~-c_4=d_1d_2. $$ 
  Hence
  $$
  W(P)=-(d_1Q_1+d_2Q_2):(d_1\bm{m}_1\bm{m}_1+d_2\bm{m}_2\bm{m}_2)
  -\epsilon Q_1:\bm{m}_1\bm{m}_1. 
  $$
  Similar to the first part of the theorem, we may suppose that $d_1Q_1+d_2Q_2$ is diagonal, and let
  $$
  W_1(P)=-(d_1Q_1+d_2Q_2):(d_1\bm{m}_1\bm{m}_1+d_2\bm{m}_2\bm{m}_2)
  -\epsilon(Q_{1,ii}m_{1i}^2+2Q_{1,23}m_{12}m_{13}). 
  $$
  It also holds $W_1(J_1PJ_3)=W_1(P)$, giving
  \begin{align*}
    &Q_{1,12}^2+Q_{1,13}^2\\
    =&\frac{\int \d\nu \exp(-W_1(P))
      (Q_{1,12}m_{11}m_{12}+Q_{1,13}m_{11}m_{13})
      \sinh(2\epsilon(Q_{1,12}m_{11}m_{12}+Q_{1,13}m_{11}m_{13}))}
    {\int \d\nu \exp(-W_1(P))
      \cosh(2\epsilon(Q_{1,12}m_{11}m_{12}+Q_{1,13}m_{11}m_{13}))}. 
  \end{align*}
  Since $0<\epsilon\le 2$, using $x\tanh(x)<x^2~(x\ne 0)$, we obtain
  \begin{align*}
    &Q_{1,12}^2+Q_{1,13}^2\\
    \le &\frac{\int \d\nu \exp(-W_1(P))
      2\epsilon(Q_{1,12}m_{11}m_{12}+Q_{1,13}m_{11}m_{13})^2
      \cosh(2\epsilon(Q_{1,12}m_{11}m_{12}+Q_{1,13}m_{11}m_{13}))}
    {\int \d\nu \exp(-W_1(P))
      \cosh(2\epsilon(Q_{1,12}m_{11}m_{12}+Q_{1,13}m_{11}m_{13}))}.
  \end{align*}
  But 
  \begin{align*}
  (Q_{1,12}m_{11}m_{12}+Q_{1,13}m_{11}m_{13})^2
  \le &m_{11}^2(Q_{1,12}^2+Q_{1,13}^2)(m_{12}^2+m_{13}^2)\\
  =&(Q_{1,12}^2+Q_{1,13}^2)m_{11}^2(1-m_{11}^2)\\
  \le &\frac{1}{4}(Q_{1,12}^2+Q_{1,13}^2). 
  \end{align*}
  Therefore we get
  $$
  Q_{1,12}^2+Q_{1,13}^2\le\frac{\epsilon}{2}(Q_{1,12}^2+Q_{1,13}^2)
  \le Q_{1,12}^2+Q_{1,13}^2, 
  $$
  leading to $Q_{1,12}=Q_{1,13}=0$. Thus $Q_1$ is diagonal. 
\end{enumerate}
\end{proof}

\begin{figure}
  \centering
  \includegraphics[width=.25\textwidth]{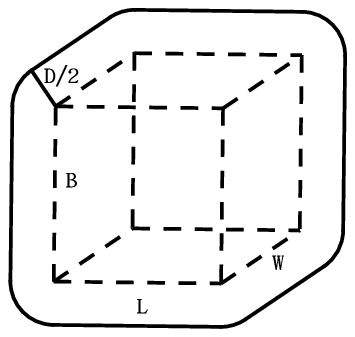}
  \includegraphics[width=.3\textwidth]{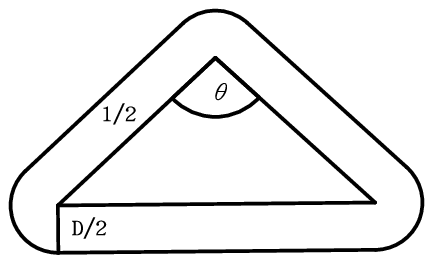}
  \includegraphics[width=.3\textwidth]{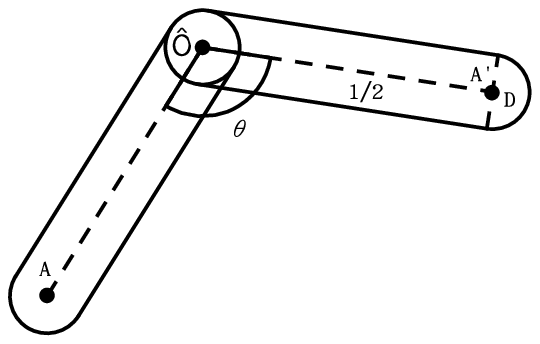}
  \caption{Molecular parameters}\label{pars}
\end{figure}
Now we apply the theorem to the molecules drawn in Fig. \ref{molecules}. 
The coefficients in the kernel function can be written as functions of molecular parameters.  
This is done for all the four molecules by approximating the excluded volume 
using various methods. 
The parameters include (see Fig. \ref{pars}): 
the diameter of sphere $D$; 
for isosceles spherotriangles and bent-core molecules, 
the length of lateral or arm $l/2$, the top angle $\theta$; 
and for spherocuboids, the length of three edges $W,B,L$. 

For cuboids (the case $D=0$), the coefficients given by Starley \cite{Bi1}, 
interpolated from the excluded volume at specific orientations, are
\begin{align*}
  c_2 &=c\left[-B(W^2+L^2)-W(L^2+B^2)+4WBL-(L^2-BW)(B-W)\right],\\
  c_3 &=c\left[-B(W^2+L^2)-W(L^2+B^2)+4WBL+(L^2-BW)(B-W)\right], \\
  c_4 &=c\left[-B(W^2+L^2)-W(L^2+B^2)+L(W^2+B^2)+2WBL\right]. 
\end{align*}
The coefficients given in \cite{quadproj} for spherocuboids, 
based on the projection of excluded volume in \cite{Mulder}, are
\begin{align*}
  c_2 &=\frac{15c}{16}\left[-B(W^2+L^2)-W(L^2+B^2)+4WBL-(L^2-BW)(B-W)-\frac{\pi D}{2}(L-B)^2\right],\\
  c_3 &=\frac{15c}{16}\left[-B(W^2+L^2)-W(L^2+B^2)+4WBL+(L^2-BW)(B-W)-\frac{\pi D}{2}(L-W)^2\right], \\
  c_4 &=\frac{15c}{16}\left[-B(W^2+L^2)-W(L^2+B^2)+L(W^2+B^2)+2WBL-\frac{\pi D}{2}(L-W)(L-B)\right]. 
\end{align*}
When $D=0$, they are proportional to the Starley's. 
The coefficients for spherotriangles, computed in \cite{SymmO} also by projection, are 
\begin{align*}
  c_{1}&=\frac{3}{8}cl^2DK(\theta)\ge 0, \\
  c_{2}&=-\frac{15}{64}cl^3\sin\theta\cos^2\frac{\theta}{2}
  -\frac{15\pi}{128}cl^2D\cos^4\frac{\theta}{2},\\
  c_{3}&=-\frac{15}{64}cl^3\sin\theta\sin\frac{\theta}{2}(1+\sin\frac{\theta}{2})
  -\frac{15\pi}{128}cl^2D\sin^2\frac{\theta}{2}(1+\sin\frac{\theta}{2})^2, \\
  c_{4}&=-\frac{15}{128}cl^3\sin\theta(1+\sin\frac{\theta}{2})
  -\frac{15\pi}{128}cl^2D\cos^2\frac{\theta}{2}\sin\frac{\theta}{2}
  (1+\sin\frac{\theta}{2}). 
\end{align*}
For bent-core molecules, the coefficients can be calculated numerically as is 
described in \cite{SymmO}. They are proportional to $cl^3$ and depend on two 
dimensionless parameters $D/l$ and $\theta$. 
It needs to be pointed out that $c_1\ge 0$ for bent-core molecules. 

\begin{figure}
\centering
\includegraphics[width=.49\textwidth, keepaspectratio]{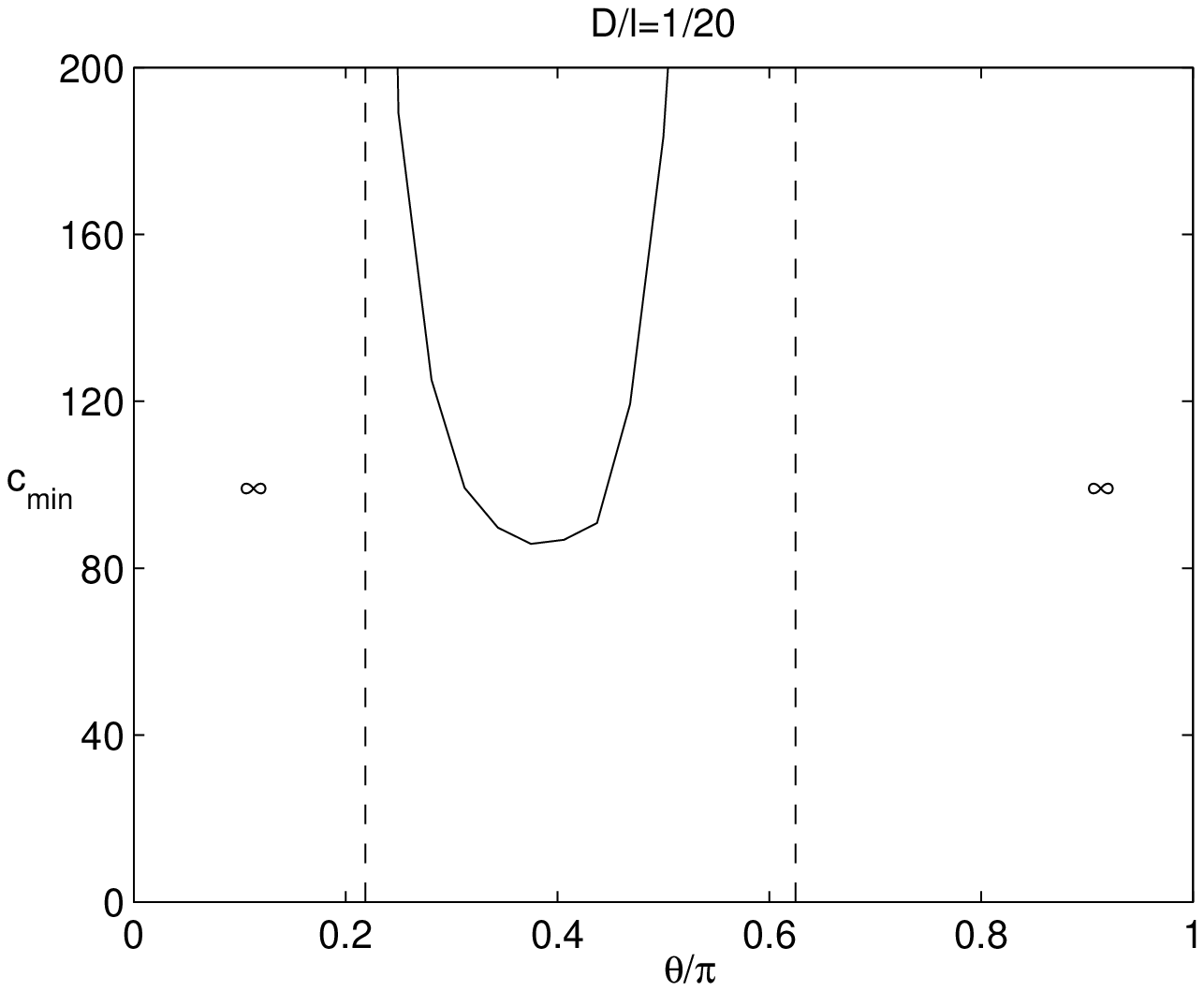}
\includegraphics[width=.49\textwidth, keepaspectratio]{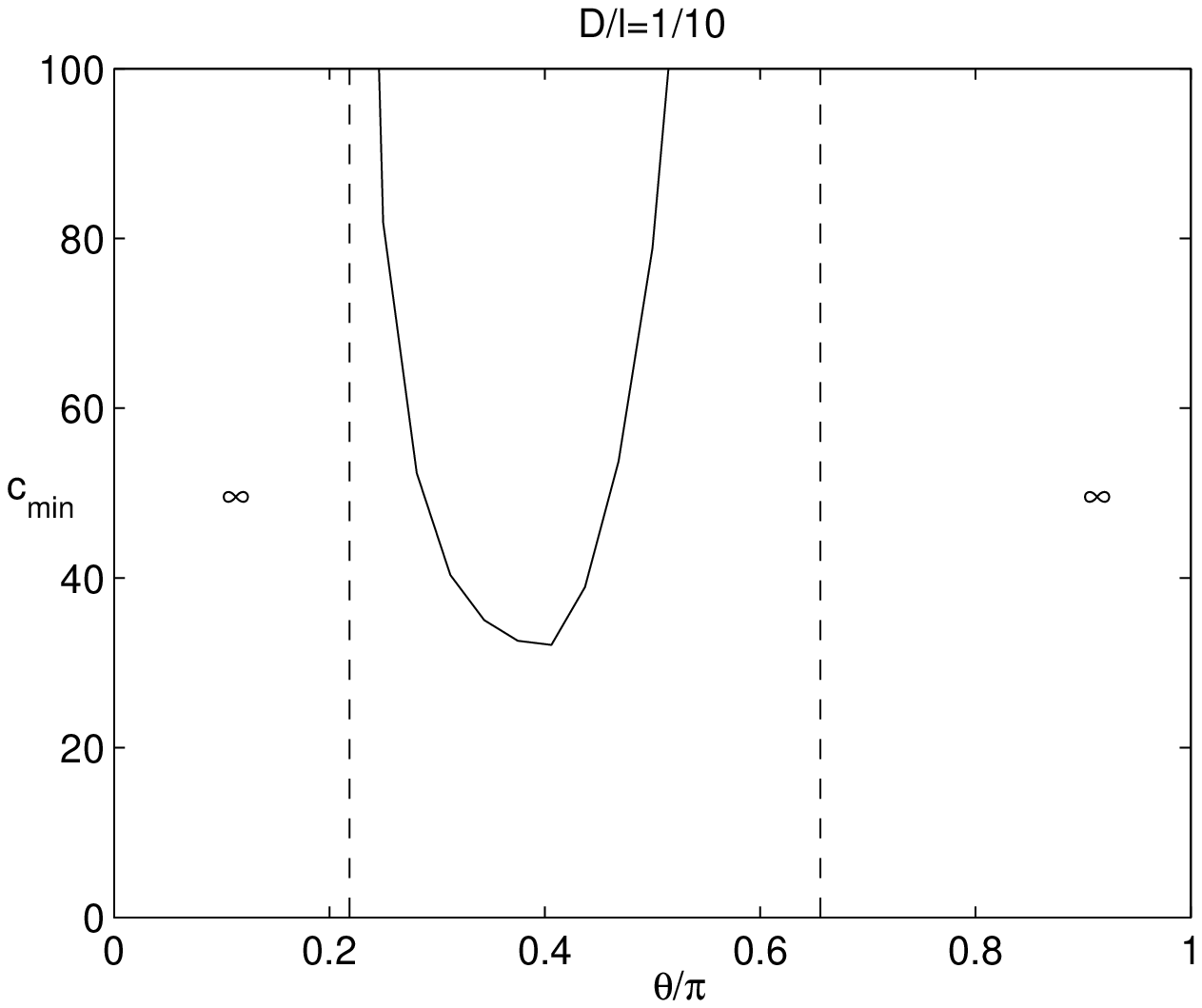}
\includegraphics[width=.49\textwidth, keepaspectratio]{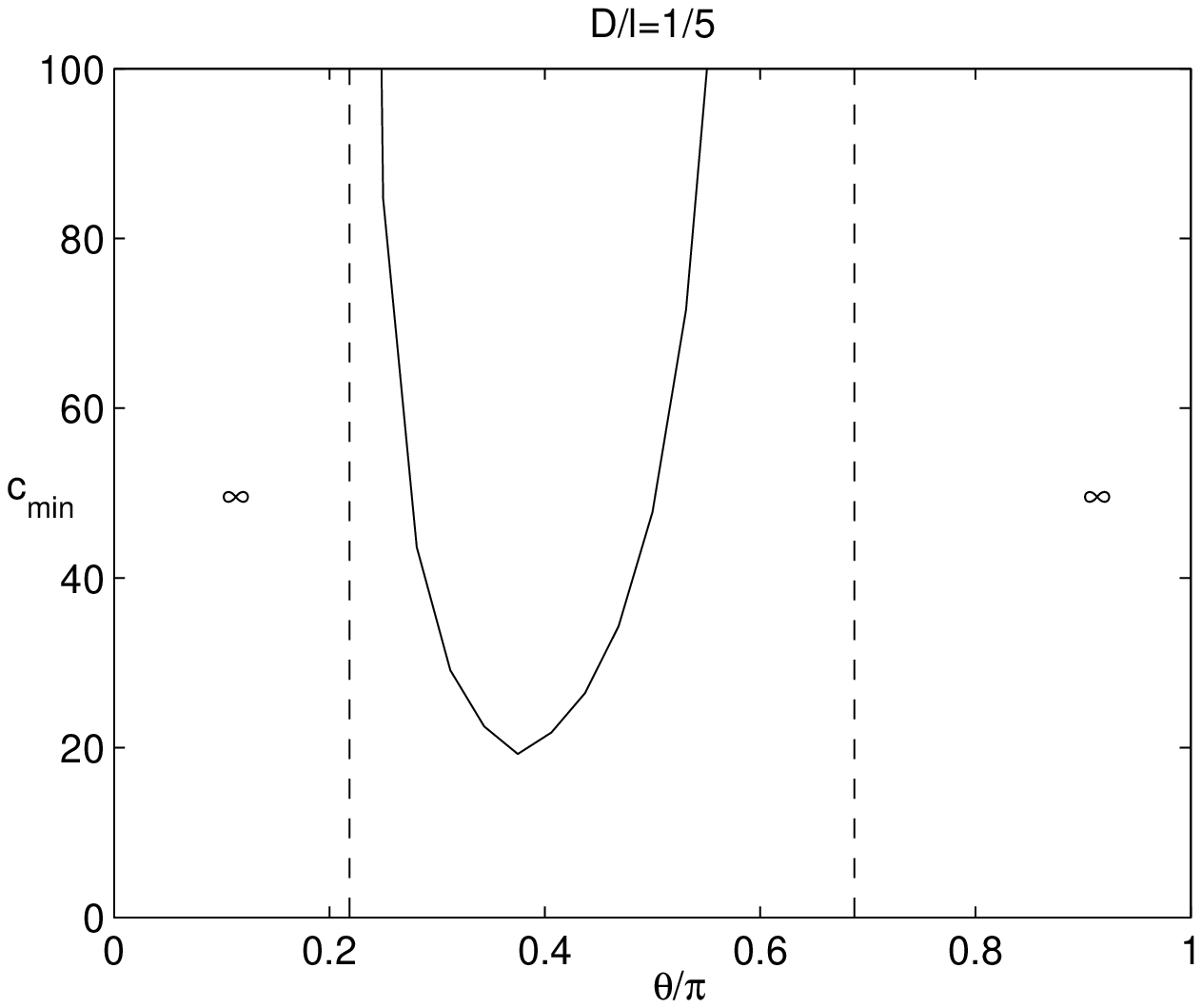}
\caption{Minimal $c$ that makes $\epsilon>2$, $c_{\mbox{\tiny min}}$. }\label{cmin}
\end{figure}
The coefficients derived from spherocuboids and spherotriangles satisfy the first condition. 
In fact, for spherocuboids, it gives
$$
c_4^2-c_2c_3=k(W-B)^2(B-L)^2(L-W)^2\ge 0, 
$$
where $k$ is a positive number. 
For spherotriangles, we have 
$$
c_4^2-c_2c_3=\left(\frac{15cl^3\sin\theta}{128}\right)^2
(2\sin\frac{\theta}{2} - 1)^2(\sin\frac{\theta}{2} + 1)^2\ge 0. 
$$
Since the product of two eigenvalues equals to $c_2c_3-c_4^2\le 0$, 
it follows from the theorem that for both molecules, 
$Q_1$ and $Q_2$ share an eigenframe. 

Now we turn to bent-core molecules. 
Because $c_i$ are propotional to the $cl^3$, we set $l=2$ without loss of generality. Then we have 
$$
\epsilon=-\frac{c_2c_3-c_4^2}{c_3}=-c\cdot\frac{c_2^{(0)}c_3^{(0)}-(c_4^{(0)})^2}{c_3^{(0)}}, 
$$
if $c_2c_3-c_4^2>0$, where $c_i^{(0)}$ stands for the value at $c=1$. 
Note that the theorem still holds when switching $c_2$ and $c_3$. 
Therefore the minimal $c$ to make $\epsilon\ge 2$ is
$$
c_{\mbox{\tiny min}}=\frac{\max\{-2c_2^{(0)},-2c_3^{(0)}\}}{c_2^{(0)}c_3^{(0)}-(c_4^{(0)})^2}. 
$$
We calculate $c_{\mbox{\tiny min}}$ for $D/l=1/20,1/10,1/5$, plotted in Fig. \ref{cmin}. 
In the regions outside the dashed line, which are labeled with $\infty$, 
it holds $c_4^2-c_2c_3\ge 0$. 
In the intermediate region, the value of $c_{\mbox{\tiny min}}$ is large enough to generate modulation, which is discussed in another paper \cite{tensor}. 

\section{Concluding remarks\label{concl}}
We have proved that $Q_1$ and $Q_2$ share an eigenframe conditionally. 
Here we would like to provide more computational results suggesting that it 
holds always for the kernel (\ref{QuadApp}). 
In fact, we do simulation with $c_4=0$ and $c_2+c_3=-20\ (c_2,c_3\le 0),\ c_1\in[0,3]$. 
Even if $c_1=0$, it is far from the condition in the theorem. 
To evaluate the distance between two eigenframes, we calculate the Frobenius 
norm $||Q_1Q_2-Q_2Q_1||_F$, which equals to zero when two eigenframes coincide. 
It turns out that $||Q_1Q_2-Q_2Q_1||_F\le 10^{-9}$, 
indicating that $Q_1$ and $Q_2$ shall share an eigenframe. 

Summarizing the existing results, we claim a conjecture that the phase symmetry 
maintains molecular symmetry for the quadratic kernels determined by 
the $D_{\infty h}$, $C_{\infty v}$, $D_{2h}$ and $C_{2v}$ symmetries. 
We give a proof with a condition that is applicable to three classes of molecules. 
A complete proof is yet to be reached and shall be an interesting problem. 
It is also intriguing to see whether it holds for higher-order kernel and other symmetries. 

\vspace{10pt}
{\textbf{Acknowledgement}} Dr. Yucheng Hu provides some useful suggestions. 
P. Zhang is partly supported by National Natural Science Foundation of China 
(Grant No. 11421101 and No. 11421110001).

\bibliographystyle{plain}
\bibliography{bib_diag} 

\end{document}